\newtheorem{proposition}{Proposition}
\newenvironment{proof}[1][Proof]{\begin{trivlist}
\item[\hskip \labelsep {\bfseries #1}]}{\end{trivlist}}
\begin{document}


\title{Geometric Multiaxial Representation of N-qubit Mixed Symmetric Separable States}



\author{Suma SP}
\email[]{sumarkr@gmail.com}
\author{Swarnamala Sirsi}
\affiliation{Yuvaraja's College, University of Mysore, Mysuru.}
\author{Subramanya Hegde}
\affiliation{School of Physics, Indian Institute of Science Education and Research, Thiruvananthapuram.}
\author{Karthik Bharath}
\affiliation{University of Nottingham, Nottingham, U.K.}


\begin{abstract}
Study of an N qubit mixed symmetric separable states is a long standing challenging problem as there exist no unique separability criterion. In this regard, we take up the N-qubit mixed symmetric separable states for a detailed study as these states are of experimental importance and offer elegant mathematical analysis since the dimension of the Hilbert space reduces from $2^{N}$ to $N+1$.  Since there exists a one to one correspondence between spin-j system and an N-qubit symmetric state, we employ Fano statistical tensor parameters for the parametrization of spin density matrix. Further, we use geometric multiaxial representation(MAR) of density matrix to characterize the mixed symmetric separable states. Since separability problem is NP hard, we choose to study it in the continuum limit where mixed symmetric separable states are characterized by the P-distribution function $\lambda(\theta,\phi)$. We show that the N-qubit mixed symmetric separable state can be visualized as a uniaxial system if the distribution function is independent of $\theta$ and $\phi$. We further choose distribution function to be the most general positive function on a sphere and observe that the statistical tensor parameters characterizing the N-qubit symmetric system are the expansion coefficients of the distribution function. As an example for the discrete case, we investigate the MAR of a uniformly weighted two qubit mixed symmetric separable state. We also observe that there exists a correspondence between separability and classicality of states.\\\\

\end{abstract}

\pacs{}
\keywords{mixed symmetric states, separability,entanglement}
\maketitle
 Study of separable states is the corner stone of entanglement problem. Bell inequalities were first used 
for the identification of entanglement. The most operationally convenient criterion for the detection of
entanglement is given by Peres-Horodecki and is called positive partial transpose (PPT)
criterion \cite{Peres} which is necessary and sufficient for $2 \times 2$
and $2 \times 3$ systems only\cite{horodecki2}. There exist other criteria in literature for detecting entanglement. One among them is
the realignment criterion \cite{chen}, which exhibits a
powerful PPT entanglement detection capability. Entanglement witness\cite{lewenstein, horodecki2} and uncertainty relations\cite{takeuchii} pose operational difficulty as they depend on the expectation value of some observables for the state in question.
 \\ In practice we deal with mixed states rather than pure states due to decoherence effects and hence it is of great importance to study  mixed separable states. There exists many important papers \cite{sp, dur, acin, eggeling, eltscka, gunhe, jung, gunhe2, huber} for mixed states in the literature; classification of local unitary equivalent classes
of symmetric N-qubit mixed states and an algorithm to identify pure separable states\cite{samin} based on the geometrical Multiaxial Representation
(MAR) of the density matrix\cite{rama} have been investigated. Makhlin\cite{maklin} has
presented a complete set of 18 local polynomial invariants of two qubit mixed states and
demonstrated the usefulness of these invariants to study entanglement. Also, detection of multipartite entanglement has been studied in depth(see for example \cite{horodecki3, yu, hiesmayr}). Geometric entanglement properties of pure symmetric N qubit states are studied in detail\cite{pgeometric}. To this day, no generally accepted theory exists for the classification and quantification of entanglement for mixed states. 
\\ A general N qubit mixed state resides in the Hilbert space of dimension $ 2^{N} \otimes 2^{N}$ which makes the mathematical computations complicated except for the lower N values whereas permutationally symmetric N qubit mixed states residing in $N+1$ dimensional Hilbert space not only offer elegant mathematical analysis but are also useful in variety of quantum information tasks. They occur naturally as ground states in some Bose- Hubbard models and are the most experimentally investigated states. But relatively a not much is explored about the entanglement or the separability criteria of these states and the investigation is mostly restricted to states like W and GHZ \cite{mermin, cabello, heaney}. Recently F. Bohnet-Waldra et.al, \cite{ppt} have studied positive-partial-transpose (PPT) separability criterion for symmetric states of multiqubit systems
in terms of matrix inequalities. They have also established a correspondence between classical spin states and symmetric separable states. Analytical expression for quantumness of  pure spin-1 state or equivalently two qubit pure symmetric state  using Majorana representation of density matrix is given in \cite{quantumness}. Further, this has been extended numerically to provide an upper bound of quantumness for mixed states. Majorana representation can not be extended naturally to study mixed symmetric states. Therefore in this paper we employ the little known geometric MAR of spin-j system to study the separability problem of mixed symmetric states. This method can also be used to investigate the quantumness of such states analytically. \\ \\
This paper is organized as follows: In section I we discuss the correspondence between symmetric states and spin systems. In Section II we explain the decomposition of density matrix in terms of the well known Fano statistical tensor parameters. Section III contains the description of the Multiaxial representation of pure and mixed density matrices. Section IV consist of two propositions which illustrate the conditions to be satisfied by  mixed symmetric separable density matrix. Section V deals with the Multiaxial representation of mixed symmetric separable states and their characterization.\\ 
\section{Correspondence between Symmetric states and Spin systems}
Set of N-qubit states that remains unchanged by permutation of individual particles are called symmetric states. That is, $\pi_{i,j} \rho^{symm}_{1,2....N} = \rho^{symm}_{1,2....N} \pi_{i,j} = \rho^{symm}_{1,2....N}$ where $\pi_{i,j}$ is called permutation operator, $i\neq j=1,2....N$. A general N-qubit state belongs to the Hilbert space  $C^{2^{\otimes N}}$ and  is represented by a density matrix of dimension  $2^{N} \times 2^{N}$. An N-qubit symmetric state has one-to-one correspondence with a spin-$j$ state where $j=\frac{N}{2}$. Therefore the $N+1$-dimensional symmetric subspace can be identified with a $2j+1$-dimensional Hilbert space which is the carrier space of the angular momentum operator $\bf{J}$. We focus on the such symmetric states in this article as they are of considerable interest. 

\section{Fano representation of spin-\lowercase{j} assembly}
A general spin-$j$ density matrix can be represented in terms of statistical tensor parameters \cite{Fano1953, Fano1957, Fano1983, Fano} $t^{k}_{q}\,{'}s$:
\begin{equation}
\rho(\vec{J}) = \frac {Tr(\rho)}{(2j+1)}\sum^{2j}_{k=0}\,\sum^{+k}_{q=-k}\,\, t^{k}_{q}\, \tau^{k^{\dagger}}_{q}(\vec{J})\,\,,
\end{equation}
where $ \vec{J} $ is the angular momentum operator with components $ J_{x}, J_{y}, J_{z}$. The operators $\tau^{k}_{q}$, (with $\tau^{0}_{0} = I$, the identity operator) are irreducible tensor operators of rank $k$ in the $ 2j+1$ dimensional angular momentum space with projection $q$ along the axis of quantization in $\mathbb{R}^3$. The elements of $\tau^{k}_{q}$ in the angular momentum basis $|jm\rangle, m=-j,\ldots,+j$ are given by 
$\langle{jm'}|\tau^{k}_{q}(\vec{J})|{jm}\rangle = [k]\,\,C(jkj;mqm')$, where $C(jkj;mqm')$ are the Clebsch-Gordan coefficients and $ [k]=\sqrt{2k+1} $. The $\tau^{k}_{q}$ satisfy the orthogonality relations
$$Tr({\tau^{k^{\dagger}}_{q}\tau^{k^{'}}_{q^{'}}})= (2j+1)\,\delta_{kk^{'}} \delta_{qq^{'}}, $$
where $ \tau^{k^{\dagger}}_{q}=(-1)^{q}\,\tau^{k}_{q}$
and 
$$t^{k}_{q} =  Tr(\rho\,\tau^{k}_{q})=\sum_{m=-j}^{+j}\,\langle{jm}|\rho\,\tau^{k}_{q}|{jm}\rangle.$$
Since $\rho$ is Hermitian and $\tau^{k^{\dagger}}_{q} = (-1)^{q}\tau^{k}_{-q}$,  the complex conjugates $t^{k}_{q}\,{'}s$ satisfy the condition $t^{k^{*}}_{q}=(-1)^{q}\,t^{k}_{-q}.$
 Furthermore, $\rho=\rho^{\dagger}$ and $Tr(\rho)=1$ imply that $\rho$ can be specified by    $n^{2}-1$ independent parameters where $n=2j+1$ is the dimension of the Hilbert space. Under rotations, the spherical tensor parameters $t^{k}_{q}$ transform elegantly as
$$(t^{k}_{q})^{R} = \sum^{+k}_{q^{'}=-k}\,\, D^{k}_{q^{'}q}(\phi,\theta,\psi)\,t^{k}_{q^{'}},$$
 where $D^{k}_{q^{'}q}(\phi,\theta,\psi)$ is the $(q^{'},q)$ element of the Wigner $D$ matrix, and $(\phi,\theta,\psi)$ are the Euler angles. 
\section{Multiaxial Representation Of Pure and Mixed States}
The spherical tensor parameters $t^k_q$ of a spin-j state possess a geometric representation called the Multiaxial Representation(MAR)\cite{rama}, which is similar to the Majorana Representation. The Majorana representation is applicable to pure symmetric states only whereas MAR is applicable for general mixed spin-$j$ states as well as pure states. The MAR is characterized by the Euler angles $(\theta,\phi,\psi)$ which are related to the parameters $t^k_q$ in the following manner. Consider a rotation $ R(\phi,\theta,0)$ of the frame of reference such that $ t^{k}_{k} $ in the rotated frame vanishes:

$$(t^{k}_{k})^{R} =0= \sum_{q=-k}^{+k}\,\, D^{k}_{qk}(\phi,\theta,0)\,t^{k}_{q}.$$
This implies that using the Wigner expression for $ D^{j} $ matrices\cite{varsh}, we obtain the polynomial equation
\begin{widetext}
\begin{equation}
\chi(\theta,\phi)=\displaystyle \sum_{q=-k}^{k}\,  e^{-iq\phi}\,(-1)^{k-q}\,\sqrt{\binom{2k}{k+q}} t^{k}_{q}\,\left(  cos\frac{\theta}{2}\right)^{k+q} \,(-1)^{k-q}\,\left(  sin\frac{\theta}{2}\right)^{k-q} 
=\mathcal{A}\,\sum_{q=-k}^{+k}\,\sqrt{\binom{2k}{k+q}}\,\,t^{k}_{q}\,Z^{k-q}=0,
\end{equation}
\end{widetext}
where $Z=tan\left( \frac{\theta}{2}\right)\,e^{i\phi}$,
 and the overall coefficient
  \[ \mathcal{A}=cos^{2k}\left( \frac{\theta}{2}\right)\,e^{-ik\phi}\,.\]
A trivial solution is $\theta=\pi$. We therefore redefine $\chi(\cdot,\cdot)$ suitably as a polynomial in $Z$ as
\begin{equation}
P_1(Z)=\displaystyle \sum_{q=-k}^{+k}\sqrt{\binom{2k}{k+q}}\,t^{k}_{q}\,Z^{k-q}=0.
\end{equation} 
Alternatively, it is possible to redefine $\chi(\cdot,\cdot)$ as a polynomial $P_2$ in $Z^{\prime}=\frac{1}{Z}=cot\left( \frac{\theta}{2}\right) \,e^{-i\phi}$ with
\begin{equation}
P_2(Z^{\prime})=\sum_{q=-k}^{+k}\sqrt{\binom{2k}{k+q}}\,t^{k}_{q}\,Z^{\prime^{k+q}}=0,
\end{equation} 
by ignoring the trivial solution $\theta=0$.
In both cases, every $k$ leads to $ 2k $ solutions,
 \[\{(\theta_{1},\phi_{1}),\ldots,(\theta_{k},\phi_{k}),(\pi-\theta_{1},\pi+\phi_{1}),\ldots,(\pi-\theta_{k},\pi+\phi_{k})\}\,.  \]
Thus the $ 2k $ solutions constitute $ k $ axes or $ k $ double headed arrows: for every solution $(\theta_{i},\phi_{i})$, $(\pi-\theta_{i},\pi+\phi_{i})$ also forms a solution. 
The solution set of $P_1$ (equivalently $P_2$) provides the key insight into the geometrical interpretation of the spherical tensor parameters $t^k_q$, elucidated as follows. 
 For a fixed $(\theta_i,\phi_i),i=1,\ldots,k$, consider a unit vector $\hat{Q}_i:=\hat{Q}(\theta_i,\phi_i)$ in $\mathbb{R}^3$. Define 
$$
 s^{k}_{q} = (\ldots((\hat{Q}_1\otimes\hat{Q}_2)^{2}\otimes\hat{Q}_3)^{3}\otimes...\otimes \hat{Q}_{k-1})^{k-1}\otimes\hat{Q}_k)^{k}_{q},
$$
where
$$
 (\hat{Q}_1\otimes\hat{Q}_2)^{2}_{q}=\sum _{q_{1}}C(11k;q_{1}q_{2}q)(\hat{Q}_1)^{1}_{q_{1}} (\hat{Q}_2)^{1}_{q_{2}},
 $$
and the spherical components of $ \hat{Q} $ are given by,
$$
(\hat{Q}(\theta,\phi))^{1}_{q}= \sqrt{\frac{4\pi}{3}}\,\,Y^{1}_{q}(\theta,\phi).
$$
Here $Y^{1}_{q}(\theta,\phi)$ are the well known spherical harmonics.\\

As a consequence, we can state that
$$
 t^{k}_{q} = r_{k}(...((\hat{Q}_1\otimes\hat{Q}_2)^{2}\otimes\hat{Q}_3)^{3}\otimes\cdots\otimes \hat{Q}_{k-1})^{k-1}\otimes\hat{Q}_k)^{k}_{q}. 
$$
Thus in MAR, the symmetric state of $N$-qubit assembly can be represented geometrically by a set of $N=2j$ spheres of radii $r_{1},r_{2},\ldots,r_{k}$ corresponding to each value of $k$. The $k^{th}$ sphere in general consists of a constellation of 2$k$ points on its surface specified by $\hat{Q}_i:=\hat{Q}(\theta_{i},\phi_{i})$ and   $\hat{Q}(\pi-\theta_{i},\pi+\phi_{i}), i=1,2,...,k $. In other words, for a fixed $k$, every $ t^{k}_q, q=-k, -k+1,\ldots,0,1,\ldots,k$, is specified by $k$ axes in a sphere of radius $ r_{k} $. 

\section{Mixed Symmetric Separable States}
Before employing the MAR to develop a criterion for separability, we examine some properties of mixed symmetric separable state. By definition, an N-qubit state is said to be fully separable if it can be decomposed as $\rho = \sum_{i=1}^{n} \lambda_{i}\rho_{i}^{1} \otimes\rho_{i}^{2}\otimes \cdots \otimes\rho_{i}^{N}$, where for $\rho^\alpha_i, \alpha=1,\ldots,N$ is the $i$th decomposition of the system with $\alpha$ as the qubit index. The following Propositions elucidate the relationships between separable, mixed and symmetric states. 
\begin{proposition}
Any N-qubit fully separable state $\rho = \sum_{i=1}^{n} \lambda_{i}\rho_{i}^{1} \otimes\rho_{i}^{2}\otimes \cdots \otimes \rho_i^N$ where $\sum_{i=1}^n\lambda_{i}=1, 0\leq \lambda_i\leq 1$ is mixed if $n\geq2$. 
\end{proposition}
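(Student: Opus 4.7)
The plan is to show that $\rho$ cannot be a pure state by computing its purity $Tr(\rho^2)$ and showing it is strictly less than $1$. The characterization I would rely on throughout is the standard one: a density matrix $\sigma$ is pure iff $Tr(\sigma^2)=1$, and mixed otherwise. So the whole task reduces to bounding $Tr(\rho^2)$.

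First, I would expand directly, using the product structure together with the multiplicativity of the trace on tensor products,
\[
Tr(\rho^2) = \sum_{i,j=1}^{n} \lambda_i \lambda_j \prod_{\alpha=1}^{N} Tr(\rho_i^\alpha \rho_j^\alpha).
\]
For each single-qubit factor, the Cauchy--Schwarz inequality (equivalently, the spectral bound $0 \leq \rho_i^\alpha \leq I$ combined with $Tr(\rho_i^\alpha)=1$) gives $Tr(\rho_i^\alpha \rho_j^\alpha) \leq 1$, with equality iff $\rho_i^\alpha = \rho_j^\alpha$ is a pure state. Inserting the bound and using $\sum_i \lambda_i = 1$ yields $Tr(\rho^2) \leq 1$, so I immediately recover $\rho$ is a legitimate density matrix and reduce the problem to the equality case.

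Next I would rule out the equality case. $Tr(\rho^2)=1$ forces $Tr(\rho_i^\alpha \rho_j^\alpha) = 1$ for every pair $(i,j)$ with $\lambda_i\lambda_j > 0$ and every qubit $\alpha$. This, combined with the equality condition above, forces the corresponding $\rho_i^\alpha$ to all coincide and to be pure, collapsing the sum to a single product-state term. Under the (intended) reading of the hypothesis $n\ge 2$ as a genuinely nontrivial convex combination of at least two distinct product states with strictly positive weights, this collapsed configuration is excluded, so the inequality is strict and $\rho$ is mixed.

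The main delicacy, and where I expect the statement to need tightening, is precisely in that last sentence: as literally written, $n\geq 2$ alone permits trivial decompositions such as $\lambda_1=1$ with the remaining $\lambda_i=0$, or repeated identical summands, in which case $\rho$ can perfectly well be pure. I would therefore either strengthen the hypothesis to require $\lambda_i>0$ with the product states pairwise distinct, or equivalently appeal to the extremal-point characterization of pure states---pure states are extreme in the convex set of density matrices, hence they admit only trivial convex decompositions---to conclude directly that a genuine $n\geq 2$ mixture of distinct product states cannot be pure.
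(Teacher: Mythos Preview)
Your proof is correct and follows essentially the same approach as the paper: both expand $Tr(\rho^2)$ via multiplicativity of the trace on tensor products, bound each single-qubit overlap $Tr(\rho_i^\alpha\rho_j^\alpha)\le 1$, and then analyse the equality case, arriving at the same loophole you flag in the hypothesis. The only cosmetic difference is that the paper obtains the single-qubit bound through the explicit Bloch-vector formula $Tr(\rho_i^\alpha\rho_j^\alpha)=\tfrac{1}{2}[1+\vec{p}_i(\alpha)\cdot\vec{p}_j(\alpha)]$ rather than Cauchy--Schwarz, and it likewise concedes at the end that purity survives exactly when all $\vec{p}_i(\alpha)$ coincide, collapsing the sum to a single product term.
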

\begin{proof}
Consider the density matrix  for the $\alpha$th qubit where $\alpha=1,\ldots,N$, denoted as $ \rho_{i}^{\alpha}= \frac{1}{2}\left[I+\vec{\sigma}.\cdot\vec{p}_{i}(\alpha)\right], i= 1, 2,...n$, where $\vec{p}_i(\alpha)$ is the polarization vector characterizing the $\alpha$th qubit in the $i$th decomposition. 
For $\rho$ to be pure, $Tr\rho^{2} = 1$ which implies that
\[\sum_{i,j} \lambda_{i}\lambda_{j} Tr(\rho_{i}^{1}\rho_{j}^{1}) Tr (\rho_{i}^{2}\rho_{j}^{2})....Tr(\rho_{i}^{N}\rho_{j}^{N}) = 1.\]
 Therefore \[\sum_{i,j} \lambda_{i}\lambda_{j}[1- Tr(\rho_{i}^{1}\rho_{j}^{1}) Tr (\rho_{i}^{2}\rho_{j}^{2})...Tr (\rho_{i}^{N}\rho_{j}^{N})]= 0.\]
Consequently $ Tr (\rho_{i}^{\alpha}\rho_{j}^{\alpha}) = \frac{1}{2}\left[I + \vec{p}_{i}(\alpha)\cdot\vec{p}_{j}(\alpha)\right]< 1 $
since $\vec{p}_{i}(\alpha)\cdot\vec{p}_{j}(\alpha) < 1$ for $\alpha=1,2...N$, which implies that
\[\sum_{i,j} \lambda_{i}\lambda_{j} (1- Tr(\rho_{i}^{1}\rho_{j}^{1}) Tr (\rho_{i}^{2}\rho_{j}^{2})...  Tr(\rho_{i}^{N}\rho_{j}^{N}))> 0,\]
owing to \[1- Tr(\rho_{i}^{1}\rho_{j}^{1}) Tr (\rho_{i}^{2}\rho_{j}^{2})... Tr (\rho_{i}^{N}\rho_{j}^{N})> 0, \quad \lambda_{i}>0.\]Therefore $$\sum_{i,j} \lambda_{i}\lambda_{j} (1- Tr(\rho_{i}^{1}\rho_{j}^{1}) Tr (\rho_{i}^{2}\rho_{j}^{2})... Tr (\rho_{i}^{N}\rho_{j}^{N})) \not= 0,$$
implying that $\rho$ cannot be pure.
However $\rho$ can be pure if $\vec{p}_{i}(\alpha)\cdot\vec{p}_{j}(\alpha) = 1$  for all $i, j$, in which case there is only one term \[\rho = \rho^{1} \otimes \rho^{2} \otimes .......\otimes \rho^{N}.\]
Therefore, we may define a separable mixed state as \[\rho=\sum_{i=1}^{n}\lambda_{i}\rho_{i}^{1}\otimes\rho_{i}^{2}\otimes....\otimes \rho_{i}^{N}, \]
where $n>1$ and $\rho_{i}^{1},\rho_{i}^{2},...,\rho_{i}^{N} $ are pure. 
\end{proof}
\begin{proposition}
An N-qubit fully separable mixed  state \begin{equation}\rho=\sum_{i=1}^{n}\lambda_{i}\rho_{i}^{1}\otimes\rho_{i}^{2}\otimes....\otimes \rho_{i}^{N}, \end{equation} is permutationally symmetric if $\rho_{i}^{1}=\rho_{i}^{2}=...=\rho_{i}^{N}$.
\end{proposition}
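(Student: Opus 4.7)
The plan is to verify directly that the stated condition $\rho_i^1 = \rho_i^2 = \cdots = \rho_i^N$ makes each term in the convex combination permutation invariant, and then appeal to linearity. First I would invoke Proposition 1, which guarantees that in any decomposition of a separable mixed state the single-qubit factors $\rho_i^\alpha$ are \emph{pure}. Under the hypothesis of the proposition, all the factors in the $i$th term coincide, so I can write $\rho_i^\alpha=|\psi_i\rangle\langle\psi_i|$ for a single normalized single-qubit vector $|\psi_i\rangle$ that does not depend on the qubit index $\alpha$.

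Next I would note that the $i$th term in the sum is then the projector $(|\psi_i\rangle\langle\psi_i|)^{\otimes N}=|\Psi_i\rangle\langle\Psi_i|$ with $|\Psi_i\rangle:=|\psi_i\rangle^{\otimes N}$. The vector $|\Psi_i\rangle$ is manifestly invariant under every transposition of qubits, $\pi_{\alpha,\beta}|\Psi_i\rangle=|\Psi_i\rangle$, so $|\Psi_i\rangle$ lies in the fully symmetric subspace. Consequently
\begin{equation*}
\pi_{\alpha,\beta}\,(|\Psi_i\rangle\langle\Psi_i|)=|\Psi_i\rangle\langle\Psi_i|=(|\Psi_i\rangle\langle\Psi_i|)\,\pi_{\alpha,\beta},
\end{equation*}
which matches exactly the two-sided condition the paper uses to define a symmetric state in Section~I. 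Finally, by linearity, the weighted sum $\rho=\sum_i\lambda_i|\Psi_i\rangle\langle\Psi_i|$ inherits the same two-sided invariance for every pair $\alpha\neq\beta$, establishing the proposition.

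I do not expect any real obstacle, as the argument is a direct computation. The one subtlety worth flagging is that the step ``$|\Psi_i\rangle$ is permutation invariant'' crucially uses purity of the single-qubit factors: if one allowed a mixed $\rho_i$ with $\rho_i^\alpha\equiv\rho_i$, then $\rho_i^{\otimes N}$ would generally have support outside the symmetric subspace (for instance, $(I/2)^{\otimes N}$ does not satisfy $\pi_{\alpha,\beta}\rho=\rho$), so appealing to Proposition 1 to restrict to pure factors is the key ingredient that makes the tensor-product trick go through.
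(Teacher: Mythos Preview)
Your argument is correct and cleanly establishes the sufficiency direction exactly as the proposition is worded: equal pure single-qubit factors give a permutation-invariant product vector $|\psi_i\rangle^{\otimes N}$, hence a projector satisfying the two-sided condition $\pi_{\alpha,\beta}\rho=\rho=\rho\,\pi_{\alpha,\beta}$, and linearity finishes. Your remark that purity (via Proposition~1) is essential for the paper's strong definition of symmetry is also on point.

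The paper's own proof, however, is not arguing this direction. It is really aimed at the \emph{converse}: that a separable state lying in the symmetric subspace must have $\rho_i^{1}=\rho_i^{2}=\cdots=\rho_i^{N}$ term by term. This necessity is what Section~V actually uses when it writes every mixed symmetric separable state as $\rho=\sum_i\lambda_i\,\rho_i^{\otimes N}$. The paper's route is an explicit two-qubit computation: it forms the symmetrized product $\tfrac{1}{2}(\rho_i^{1}\otimes\rho_i^{2}+\rho_i^{2}\otimes\rho_i^{1})$ in the computational basis, conjugates by the Clebsch--Gordan unitary $U$ into the angular-momentum basis $\{|1,1\rangle,|1,0\rangle,|1,-1\rangle,|0,0\rangle\}$, and reads off the singlet entry $\tfrac{1}{4}(1-\hat p_i^{\,1}\!\cdot\hat p_i^{\,2})$. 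That entry vanishes iff the two polarization directions coincide, and repeating the argument over all qubit pairs forces $\hat p_i^{\,1}=\cdots=\hat p_i^{\,N}$. So your proof and the paper's are complementary rather than equivalent: yours handles the easy implication in one line, while the paper's matrix calculation is doing the harder job of necessity; together they give the full ``if and only if'' that the paper evidently intends.
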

\begin{proof}
Now let us see if symmetrization of two different states $\rho_{i}^{1}$ and $\rho_{i}^{2}$ leads to a state in symmetric subspace.
For some fixed $\lambda_i$ in (5), consider the first two terms $ \rho_{i}^{1} \otimes \rho_{i}^{2} $, and define $$\rho^{12}_i:=\frac{(\rho^{1}_{i} \otimes \rho_{i}^{2}+\rho^{2}_{i} \otimes \rho_{i}^{1})}{2}.$$ Evidently $\rho^{12}_i$ is a density matrix. 

Let $\rho_{i}^{1}= \frac{I+\vec{\sigma}.\vec{p}_{i}(1)}{2}$ and \, $\rho_{i}^{2}= \frac{I+\vec{\sigma}.\vec{p}_{i}(2)}{2}$.  For notational convenience we set $\vec{p}_{i}(\alpha)=\vec{p}_i^\alpha$, $p_{i-}^{2} = p_{ix}^{2}- i p_{iy}^{2}$, $p_{i-}^{1} = p_{ix}^{1}- i p_{iy}^{1}$, $p_{i+}^{2} = p_{ix}^{2}+ i p_{iy}^{2}$, $p_{i+}^{1} = p_{ix}^{1}+ i p_{iy}^{1}$ .
Then
$$\frac{(\rho^{1}_{i} \otimes \rho_{i}^{2}+\rho^{2}_{i} \otimes \rho_{i}^{1})}{2}$$

\begin{widetext}
\begin{displaymath}
=\left[
\begin{array}{cccc}
\frac{(1+p_{iz}^{2})(1+p_{iz}^{1})}{4} &  \frac{p_{i-}^{2}(1+p_{iz}^{1})+p_{i-}^{1}(1+p_{iz}^{2})}{8} & \frac{p_{i-}^{1}(1+p_{iz}^{2})+p_{i-}^{2}(1+p_{iz}^{1})}{8} & \frac{p_{i-}^{1}p_{i-}^{2}}{4}\\ 

\frac{p_{i+}^{2}(1+p_{iz}^{1})+ p_{i+}^{1}(1+p_{iz}^{2})}{8} & \frac{(1-p_{iz}^{2})(1+p_{iz}^{1})+(1-p_{iz}^{1})(1+p_{iz}^{2})}{8} & \frac{p_{i-}^{1}p_{i+}^{2}+p_{i-}^{2}p_{i+}^{1}}{8} & \frac{ p_{i-}^{1}(1-p_{iz}^{2})+p_{i-}^{2}(1-p_{iz}^{1})}{8}\\ 

\frac{p_{i+}^{1}(1+p_{iz}^{2})+p_{i+}^{2}(1+p_{iz}^{1})}{8} & \frac{p_{i+}^{1}p_{i-}^{2}+p_{i+}^{2}p_{i-}^{1}}{8} &  \frac{(1-p_{iz}^{1})(1+p_{iz}^{2})+(1-p_{iz}^{2})(1+p_{iz}^{1})}{8} & \frac{p_{i-}^{2}(1-p_{iz}^{1})+p_{i-}^{1}(1-p_{iZ}^{2})}{8}\\ 

\frac{p_{i+}^{2}p_{i+}^{1}}{4} & \frac{p_{i+}^{1}(1-p_{iz}^{2})+p_{i+}^{2}(1-p_{iz}^{1})}{8} & \frac{(1-p_{iz}^{1})p_{i+}^{2}+(1-p_{iz}^{2})p_{i+}^{1}}{8} & \frac{(1-p_{iz}^{1})(1-p_{iz}^{2})}{4}
\end{array}
\right]
\end{displaymath}\\
\end{widetext} in computational basis {$\ket{\uparrow\uparrow},\ket{\uparrow\downarrow},\ket{\downarrow\uparrow},\ket{\downarrow\downarrow}$}. \\
We can choose a set of basis called angular momentum basis given by $ \{\ket{11}=\ket{\uparrow\uparrow }$, 
 $\ket{10}=\frac{\ket{\uparrow\downarrow}+\ket{\downarrow\uparrow}}{\sqrt 2}, \ket{1-1}=\ket{\downarrow\downarrow}, \ket{00} = \frac{|\uparrow\downarrow\rangle-|\downarrow\uparrow\rangle}{\sqrt 2}\}$ out of which the first three basis states are permutationally symmetric and the last one is permutationally anti-symmetric. The unitary transformation which connects computational basis set to the above set is, 
  \begin{displaymath}
U=\left[
\begin{array}{cccc}
1 & 0 & 0 & 0\\
0 & \frac{1}{\sqrt{2}} & \frac{1}{\sqrt{2}} & 0 \\
0 & 0 & 0 & 1\\
0 & \frac{1}{\sqrt{2}} & \frac{-1}{\sqrt{2}} & 0\\
\end{array}
\right]
.\end{displaymath}
The elements of unitary transformation are the Clebsch-Gordan(CG) coefficients. It is very well known in the angular momentum theory that the Clebsch-Gordan addition of two angular momenta $j_{1}$ and $j_{2}$ resulting in the angular momentum $j$ is given by,
\begin{equation*}
| j_{1} \ j_{2} \ j \ m \rangle = \sum_{m_{1} or m_{2}} C(j_{1} \ j_{2} \ j ; m_{1} \ m_{2} \ m) |j_{1} \ m_{1} \ \rangle |j_{2} \ m_{2} \rangle . 
\end{equation*}
Thus, 2-qubit symmetric state has one-to-one correspondence with a spin-$1$ state and the most general 2-qubit state resides in $2^{2}= 4$ dimensional Hilbert space. Thus, the Clebsch-Gordan decomposition of the space is given by $2\otimes 2 = 3 \oplus 1 $, where the highest, that is the $3$-dimensional space is the symmetric subspace.
Now, transforming the density matrix to symmetric subspace , we get \\ \\ \\

\begin{widetext}
\begin{displaymath}
U \rho U^{\dagger}=\left[
\begin{array}{cccc}
\frac{(1+p_{iz}^{2})(1+p_{iz}^{1})}{4} & \frac{p_{i-}^{1}(1+p_{iz}^{1})+p_{i-}^{1}(1+p_{iz}^{2})}{4\sqrt{2}} & \frac{p_{i-}^{1}p_{i-}^{2}}{4} & 0\\ 

\frac{p_{i+}^{2}(1+p_{iz}^{1})+p_{i+}^{1}(1+p_{iz}^{2})}{4\sqrt{2}} & \frac{(1-p_{iz}^{2}p_{iz}^{1}+p_{ix}^{1}p_{ix}^{2}+p_{iy}^{1}p_{iy}^{2})}{4} & \frac{p_{i-}^{1}(1-p_{iz}^{2})+ p_{i-}^{2}(1-p_{iz}^{1})}{4\sqrt{2}} & 0\\ 

\frac{p_{i+}^{1}p_{i+}^{2}}{4} & \frac{p_{i+}^{2}(1-p_{iz}^{1})+p_{i+}^{1}(1-p_{iz}^{2})}{4\sqrt{2}} &  \frac{(1-p_{iz}^{2})(1-p_{iz}^{1})}{4} & 0\\ 

0 & 0 & 0 & \frac{1-\hat{p_{i}^{2}}.\hat{p_{i}^{1}}}{4}
\end{array}
\right]
\end{displaymath}\\
\end{widetext}
Thus symmetrization of $\rho^{1}_{i}$ and $\rho^{2}_{i}$ does not lead to a state in symmetric subspace unless $\frac{1-\hat{p_{i}^{2}}.\hat{p_{i}^{1}}}{4} = 0$. This implies that
$\hat{p_{i}^{2}}=\hat{p_{i}^{1}}$. 
Similarly by continuing in the same way, considering the permutational symmetry of all N-qubit taking two qubit at a time, we get $\hat{p_{i}^{1}}=\hat{p_{i}^{2}} = \hat{p_{i}^{3}}= .... = \hat{p_{i}^{N}}$
Thus all the N-qubits in a partition will have same vector polarization.
Therefore a mixed symmetric separable state is an ensemble of symmetric pure separable states and henceforth we write it as $ \rho = \sum_{i}\lambda_{i}\rho_{i} \otimes \rho_{i}\otimes....\otimes\rho_{i}.$ 
\end{proof}

\section{Multiaxial Representation of Mixed Symmetric Separable States.}

To arrive at the multiaxial representation of an N-qubit symmetric separable state, let us consider $$ \rho = \sum_{i}^{n}\lambda_{i}\rho_{i} \otimes \rho_{i}\otimes \ldots \otimes \rho_{i}= \sum_{i}^{n}  \lambda_{i}  \varrho_{i}^{N} $$ where $\varrho_{i}^{N} = \rho_{i} \otimes \rho_{i} \otimes \ldots \otimes \rho_{i}$.

  The unitary transformation $\mathcal{U}$, decomposes $\varrho_{i}^{N}$ into the direct sum of its composite density matrices out of which $2j+1$ or $2(\frac{N}{2})+1$ dimensional density matrix $\rho_{i}^{'}$ is totally symmetric and the rest are zeros. The elements of $\mathcal{U}$ are the well known CG coefficients. Therefore, $\rho$ in symmetric subspace is written as \begin{equation}\rho^{j}_{symm} = \sum_{i} \lambda_{i} \rho_{i}^{'}.\end{equation}  From equation (1),

$$\rho^{j}_{symm}=\frac{1}{2j+1} \sum_{kq} t_{q}^{k} \tau_{q}^{k^{\dagger}} = \frac{1}{2j+1} \sum_{i}\sum_{kq} \lambda_{i} t_{q}^{k} (i) \tau_{q}^{k^{\dagger}}$$ which implies that,
\begin{equation}
t_{q}^{k} = \sum_{i}^{n} \lambda_{i} t_{q}^{k} (i).
\end{equation}
Clearly, each of the $\rho_{i}^{'}$'s is a pure spin-j density matrix expressed in the $\ket{jm}$ basis and the corresponding density matrix is $(\rho_{i} \otimes \rho_{i}\otimes...\otimes\rho_{i})$ in the computational basis which can also be written as $ \ket{\psi_{i}\psi_{i}\cdots\psi_{i}}\bra{\psi_{i}\psi_{i}\cdots\psi_{i}}$. The MAR of pure symmetric separable states has already been investigated\cite{samin} which we introduce here briefly: As $\rho_{i}^{'}$ is characterized by $(\theta_{i},\phi_{i})$, in a rotated frame of reference, whose z-axis is parallel to $(\theta_{i},\phi_{i})$, $\rho_{i}^{'}$ assumes a canonical form given by 
\begin{equation}
 \left(
\begin{matrix}
1 & 0 & \ldots & 0 \\
0&0 & \ldots &0 \\
\vdots & \vdots & \ddots & \vdots \\
0 & 0 & \ldots & 0\\
\end{matrix}
\right),
\end{equation} in angular momentum basis.
The only non-zero spherical tensor parameters characterizing the above state are, 
$$({t^{k}_{0}})^{rotated}=[k]\,\rho_{jj}\,C(jkj;j0j),\,\,\,\,\,\,\,\,\,\,\, k=0,1,2...2j$$
 and thus, each $t_{q}^{k} (i)$ is constructed out of single axis and the resultant $\rho_{i}^{'}$'s is characterized by one axis, namely $\hat{Q}(\theta_{i},\phi_{i})$ and represents a uniaxial system. In other words $t_{q}^{k} (i) \propto Y_{q}^{k}(\theta_{i}\phi_{i})$. Hence we write (7) as

$$t_{q}^{k} = C \sum_{i} \lambda_{i} Y_{q}^{k} (\theta_{i}\phi_{i}).$$ where C is a proportionality constant. In the continuum limit it is natural to take $t^{k}_{q}$ as,
\begin{equation}
t_{q}^{k} = C\int \lambda(\theta,\phi) Y_{q}^{k} (\theta,\phi) d\Omega
\end{equation}
where $\int \lambda(\theta,\phi) d\Omega = 1 $, $\lambda(\theta,\phi)$ is positive and $d\Omega = \sin{\theta} d\theta d\phi$.\\

Now, it is interesting to investigate the functional form of $t^{k}_{q}$ in the continuum limit. To do this, let us consider the angular momentum operator  $L^{2} = \vec{L}.\vec{L}$ and $L_{z}$ which have the form $$L_{z} = -i\hbar \frac{\partial}{\partial \phi}$$ $$L^{2} = -\hbar^{2} \frac{1}{\sin\theta}\frac{\partial}{\partial\theta}(\sin\theta \frac{\partial}{\partial\theta})-\hbar^{2}\frac{1}{\sin^{2}\theta}\frac{\partial^{2}}{\partial \phi^{2}}.$$ It can be easily seen that that $t_{q}^{k}$ is a simultaneous eigen state of $L^{2}$ and $L_{z}$ if $\lambda$ is independent of $\theta$ and $\phi$. In such a case, for every $k$, the $t^{k}_{q}$'s of the mixed symmetric separable state are characterized by k axes which are collinear.\\

\subsection{Classicality and Separability}
It is well known that a density matrix $\rho$ is called P-representable(P-rep) if it can be
written as a convex sum of coherent states, $\rho$ = $\int d{\alpha} P({\alpha}) \ket{\alpha}\bra{\alpha}$ where $\alpha$ is a coherent state and $P(\alpha)$ is a probability density function with $\int P(\alpha) d\alpha = 1$. We can identify our pure separable symmetric state $\rho_{i}^{'}$ of (6) with the coherent states as coherent states are the rotated $\ket{jj}$ states\cite{coherent} i.e,
$$ \ket{\alpha(\theta,\phi)} = \sum_{m} \ket{jm}\bra{jm}R(\phi,\theta,0)\ket{jj} = \\
\sum_{m} D^{j}_{mj}(\phi,\theta,0)\ket{jm}$$
$$ =\displaystyle \sum_{m=-j}^{j} \sqrt{\binom{2j}{j+m}} (\sin\theta)^{j-m} (\cos\theta)^{j+m} e^{-i(j+m)\phi} \ket{jm}$$ $$ $$ where $D^{j}_{mj}(\phi,\theta,0)$ is wigner D matrices.\\

 Rotated $\ket{jj}$ states assume canonical form as shown in equation(8), and hence they are pure separable states. Thus, N-qubit mixed symmetric separable states are identified with P-rep states.
 Any density matrix which is P-rep is widely accepted as classical state\cite{prep1,prep2}. Hence N-qubit symmetric separable states are classical spin-$\frac{N}{2}$ states.\\
 Conversely, a classical spin-j state with $j=\frac{N}{2}$ which is a convex mixture of coherent states can be realized as an N-qubit symmetric separable state in $2j$ tensor product space, proof of which is given in \cite{ppt}.  \\It has already been proved that $P(\alpha)$ is not uniquely determined by the density operator\cite{classicality}. We choose the most general positive function on the sphere, $\lambda(\theta,\phi) $ as the $P$ function\cite{Pfun} and study the MAR of the corresponding density matrix.
 If
  $ \lambda(\theta,\phi) = \displaystyle  \sum_{l=0}^{\infty}\sum_{m=-l}^{l} a^{l}_{m} Y^{l^{*}}_{m}(\theta,\phi)$
  then, $$ t^{k}_{q} = \int \displaystyle  \sum_{l=0}^{\infty}\sum_{m=-l}^{l} a^{l}_{m} Y^{l^{*}}_{m}(\theta,\phi) Y^{k}_{q}(\theta,\phi) d\Omega $$ $$ = \sum_{lm} a^{l}_{m} \delta_{kl} \delta_{qm} =  a^{k}_{q}. $$
  Thus, $t^{k}_{q}$'s are the expansion coefficients of the probability density function $\lambda(\theta,\phi)$. Given $a^{k}_{q}$'s, we can explicitly determine the axes from MAR. 
  
  Now as an example, let us choose a probability density function of the form $Y^{l}_{m}(\theta,\phi) {Y^{l}_{m}}^{*}(\theta,\phi)$ and study the MAR of the $t^{k}_{q}$'s belonging to the N-qubit mixed symmetric separable state. i.e,\\
$$ t_{q}^{k} = C\int  Y^{l}_{m}(\theta,\phi) {Y^{l}_{m}}^{*}(\theta,\phi) Y_{q}^{k} (\theta,\phi)d\Omega.$$
Using equation (11) of section (5.6) of \cite{varsh},
\begin{widetext}
$$t_{q}^{k}= C \int \sum_{LL^{'}} (-1)^{m} \sqrt{\frac{(2l+1)^{2}(2k+1)}{(4\pi)^{2} (2L+1)}} C(llL^{'}:000) C(L^{'}kL:000) C(llL^{'}:m -m 0) C(L^{'}kL:0qq) Y^{L}_{q}(\theta,\phi) d\Omega$$
\end{widetext}
and after integration,
\begin{widetext}
$$t_{q}^{k}= C  \sum_{LL^{'}} (-1)^{m} \sqrt{\frac{(2l+1)(2l+1)(2k+1)}{(4\pi)^{2}(2L+1)}} C(llL^{'}:000) C(L^{'}kL:000) C(llL^{'}:m -m 0) C(L^{'}kL:0qq) \delta_{L0} \delta_{q0} \sqrt{4\pi}.$$
\end{widetext}
Therefore the only non-zero $t^{k}_{q}$'s are given by
\begin{widetext}
$$t^{k}_{0}= C  \sum_{L^{'}} (-1)^{m} \sqrt{\frac{(2l+1)(2l+1)(2k+1)}{(4\pi)^{2}(2L+1)}} C(llL^{'}:000) C(L^{'}k0:000) C(llL^{'}:m -m 0) C(L^{'}k0:0qq) \sqrt{4\pi}.$$
\end{widetext}
Thus, the N-qubit mixed separable symmetric state $\rho$ characterized by the above $t^{k}_{0}$'s is a uniaxial system with the axes being collinear to the z-axis as explained in the section V(see Fig.1).

 \begin{figure}
  \includegraphics[width=\linewidth]{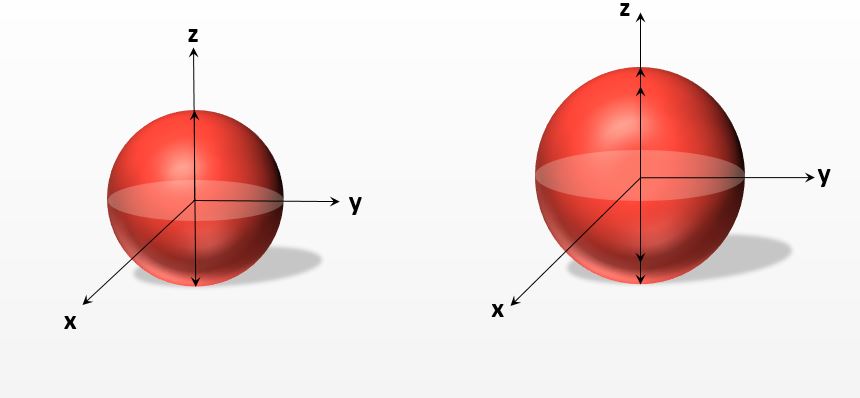}
  {Fig1 : MAR of mixed symmetric separable state showing the axes characterized by $t^{k}_{0}$ for $k=1,2.$}
  
\end{figure}

Now let us take an example of a discrete case; 2-qubit mixed symmetric separable state with uniform distribution as follows;
$$\rho=\frac{1}{4} (\rho_{x} \otimes \rho_{x})+\frac{1}{4} (\rho_{-x} \otimes \rho_{-x})+\frac{1}{4} (\rho_{z} \otimes \rho_{z})+ \frac{1}{4} (\rho_{-z} \otimes \rho_{-z})$$ where x, -x, z, -z are four maximally separated points on bloch sphere and $\rho_{i} = \frac{I+\sum_{i}\sigma_{i}p_{i}}{2}$ ; i= x,y,z. Explicitly,
 \begin{displaymath}
\rho=\frac{1}{16}\left[
\begin{array}{cccc}
6 & 0 & 0 & 2\\
0 & 2 & 2 & 0 \\
0 & 2 & 2 & 0\\
2 & 0 & 0 & 6\\
\end{array}
\right]
\end{displaymath}
 and transforming $\rho$ to $|jm>$ basis, we have
\begin{displaymath}
\rho^{jm}=\frac{1}{16}\left[
\begin{array}{ccc}
6 & 0 & 2\\
0 & 4 & 0\\
2 & 0 & 6\\
\end{array}
\right]
\end{displaymath}
The non-zero $t^{k}_{q}$'s are $t^{2}_{0} = \frac{1}{4\sqrt{2}}$, $t^{2}_{2} = t^{2}_{-2} = \frac{\sqrt{3}}{8}$. The polynomial equation (3) for $k=2$ becomes $$ z^{4} \frac{\sqrt{3}}{8}+ z^{2} \frac{\sqrt{6}}{4\sqrt{2}}+\frac{\sqrt{3}}{8}=0$$ solutions of which give us the two collinear axes namely, $\hat{Q}(\frac{\pi}{2},\frac{\pi}{2})$ and $\hat{Q}(\frac{\pi}{2},\frac{\pi}{2})$.\\

 Therefore, in each of the above cases, $t_{q}^{k}$ $\propto $ $Y_{q}^{k}(\theta,\phi).$
 Now that we have given a MAR for mixed symmetric separable state or equivalently for classical states, one can explore the quantumness of such states analytically.

  
 
\section{Conclusion}
We have identified the mixed symmetric fully separable N-qubit state with spin-j density matrix and expressed it in terms of Fano statistical tensor parameters. Using the Multiaxial Representation of density matrix, we realize that, a fully separable N-qubit symmetric state is characterized by spherical tensor parameters $t^{k}_{q}$'s which are always proportional to spherical harmonics $Y^{k}_{q}(\theta,\phi)$ in the continuum limit when the $P$ distribution function $\lambda(\theta,\phi)$ is independent of $\theta$ and $\phi$.  Further it is shown that for such a case  the mixed symmetric separable states are characterized by collinear axes. In contrast, for a general density matrix each $t^{k}_{q}$ is characterized by $k$ distinct axes. We have also identified  N-qubit mixed symmetric separable states with P-rep states or classical states. Since the distribution function is not uniquely decided by the density matrix, we have chosen it to be the most general positive function on a sphere of unit radius and concluded that $t^{k}_{q}$'s are given by expansion coefficients of the $P$ function. By chosing $Y^{l}_{m} Y^{l^{*}}_{m}$ as the probability density function we have proved that the corresponding state is characterized by non zero $t^{k}_{0}$'s only. In other words the axes are collinear.  We have also examined the MAR of a two qubit mixed symmetric state consisting of four terms with equal weightage and concluded that it is characterized by collinear axes.   \\ \\

\appendix*

\acknowledgments
One of the authors, SH thanks Department of Science and Technology(DST,India) for financial assistance through its scholarship for Higher Education(INSPIRE) programme.

\bibliography{ms}

\end{document}